\newtheorem{theorem}{Theorem}
\newtheorem{definition}[theorem]{Definition}
\newtheorem{lemma}[theorem]{Lemma}
\newtheorem{corollary}[theorem]{Corollary}
\title{Reconstructing Arbitrary Trees from Traces in the Tree Edit Distance Model}
\author{Thomas Maranzatto}
\affil{Department of Mathematics, Statistics, and Computer Science, University of Illinois at Chicago}
\begin{document}

\maketitle

\begin{abstract}
In this paper, we consider the problem
of reconstructing trees from traces
in the tree edit distance model.
Previous work by Davies~et~al.~\cite{DaviesRR19}
analyzed special cases of reconstructing
labeled trees.  In this work, 
we significantly expand our understanding of this problem by giving general results
in the case of arbitrary trees.
Namely, we give:
a reduction from the tree trace reconstruction problem to the more classical 
string reconstruction problem when the tree
topology is known, a lower bound for learning
arbitrary tree topologies, and a general
algorithm for learning the topology of any tree 
using techniques of Nazarov and Peres~\cite{NazarovP17}.
We conclude by discussing why
arbitrary trees require exponentially many
samples under the left propagation model.
\end{abstract}

\section{Introduction}
 In this paper, we study the complexity of
reconstructing trees from traces under the 
tree edit distance model, as defined by
Davies~et~al.~\cite{DaviesRR19}. \emph{Tree trace reconstruction} is a generalization of the traditional \emph{trace reconstruction} problem, where we observe noisy samples of a binary string after passing it through a deletion channel several times~\cite{BatuKKM04}.  
Trace reconstruction and tree trace reconstruction
have a variety of applications including in  archival DNA storage, sensor networks, and linguistic reconstruction \cite{BhardwajDNA, ChurchDNA, DNAStar, bouchard-cote_automated_2013}.

\subsection{String trace reconstruction}
Below we define the string trace reconstruction problem.

\begin{definition}[{string trace reconstruction}]
In the string trace reconstruction problem, a trace of a string $s$ on $n$ binary bits is obtained by running the string
through a deletion channel that removes each bit
of $s$ independently with probability $q$.  Each trace is generated independently of the other traces.
The goal is to reconstruct $s$ with probability at least $1-\delta$ using
as few traces as possible. Denote by $T(n, q, \delta)$ the minimum number of traces needed to reconstruct a string with $n$ bits with deletion rate $q$ and success probability $1 - \delta$.  We sometimes hide the dependency on $q$ by writing $T(n, \delta)$ when $q$ is known.
\end{definition}

This problem has been studied extensively, yet there is still an exponential gap in the upper and lower bounds on trace sample complexity. Chase~\cite{chase2020upper} recently proved that $\exp(\tilde{O}(n^{1/5}))$ traces are sufficient to reconstruct a string, improving on the previous bound of $\exp(O(n^{1/3}))$ traces that was independently discovered by Nazarov and Peres~\cite{NazarovP17} and De~et~al.~\cite{DeOS19}. 

Many variants on trace reconstruction have been considered.  In the average-case setting, the unknown input string is chosen uniformly at random, and improved upper bounds are known in this 
case~\cite{BatuKKM04,HoldenL18,HoldenPP18,HolensteinMPW08}.
In particular, Holden~et~al.~\cite{HoldenPP18} showed that $O(\mathrm{exp}(C \log^{1/3}{n}))$ traces are sufficient to recover the string.  

Other settings which yield improved bounds are in the smoothed-analysis setting where the unknown string is perturbed before generating traces \cite{chen2020perturbed}, in the coded setting where the string is confined to a pre-determined set \cite{brakensiek2019coded,Cheraghchi2020coded,srinivasavaradhan2018coded}, and in the approximate setting where the goal is to output a string that is close to the original in edit-distance \cite{davies2020approximate}.

While theoretically exciting, there is also a direct application of trace reconstruction to DNA data storage where algorithms are deployed to recover the data from noisy samples \cite{BhardwajDNA,ChurchDNA,organickDNA}.  Clearly there is a practical need to reduce the sample complexity as this impacts the time and cost of retrieving archival data stored with DNA. For example, \cite{ChurchDNA} were able to store a 53,426 word book, 11 JPG images, and a JavaScript program (a total of 5.27 megabits) in a DNA medium over 9 years ago.  Finding efficient reconstruction algorithms are crucial to the scalability of this storage medium.

\subsection{Tree trace reconstruction}

Recent work has allowed the creation of robust branching DNA structures for storage \cite{DNAStar, DNABranch}. This naturally leads to the tree trace reconstruction problem posed by Davies~et~al.~\cite{DaviesRR19}.   In their model, a tree of known topology has $n$ nodes, and each node $v$ also carries one hidden bit of information $\ell(v) \in \{0,1\}$.  
This corresponds to each position in a string carrying one bit of information.
They gave reductions from tree trace reconstruction to string trace reconstruction in three specific cases.

\begin{definition}[{tree trace reconstruction}]
In the tree trace reconstruction problem, a trace of a tree $T$ on $n$ nodes is obtained by removing each node 
of $T$ independently with probability $q$.  Each trace is generated independently of the other traces.
The goal is to reconstruct $T$ with probability at least $1-\delta$ using
as few traces as possible.
\end{definition}

The notion of removing a node from $T$ is ambiguous, and there are two main models that have been studied by Davies~et~al.~\cite{DaviesRR19}. In this paper we focus primarily on the \emph{tree edit distance} (TED) deletion model: when a node is deleted, all of its children  become children of its parent.  The deleted node's children take its place as a contiguous subsequence in the left-to-right order.
Another model we briefly consider is called \emph{left-propagation}.  There, when a node is deleted, recursively replace every node (together with
its label) in the left-only path starting at the node with its child in the path. This results in the
deletion of the last node of the left-only path, with the remaining tree structure unchanged.

The tree trace reconstruction problem trivially generalizes the string reconstruction problems; for example, a path graph is a tree, and
under node deletions is equivalent to the string reconstruction problem.  Some tree structures simplify the problem;
for example, 
Davies~et~al.~\cite{DaviesRR19} showed significant improvements over the 
$O(\mathrm{exp}({C n^{1/3}}))$ bound for the cases of $k$-ary trees and spider graphs.
However, it was \emph{a priori} possible that some tree structures may present harder problems and require even
more samples than their string counterparts.

In addition to the interesting theoretical
questions this model raises, it also has applications
to real-world problems in constructing branched DNA structures.  This topic is relevant for long-term data structure storage and progress on it could help
to improve data density in the storage medium. 

For example, He et al. \cite{DNAStar} were able to synthesize short-armed star DNA structures with 3 to 12 arms in total.  Their process allowed for control over how many arms are produced, and they were able to detect the number of arms in a given sample. Furthermore, Karau and Tabard-Cossa \cite{DNABranch} investigated T and Y shaped DNA strands, where the backbone of the molecule is twice as long as the branches. They provide biochemical methods for robustly detecting these structures.  Creating tree trace reconstruction algorithms is then crucial for the viability of branched DNA as an archival storage option.






\subsection{Our results}

In this paper we give the following results.
\begin{itemize}
\item In section~\ref{sec:treestotraces},
we give a reduction from the
problem of reconstructing trees from traces
to the problem or reconstructing strings from 
traces, in the case where the tree structure is known.
Our reduction only increases the running time by a multiplicative 
factor of $n$.\footnote{This answers Question 2 in Section 6.1 of
Davies~et~al.~\cite{DaviesRR19}.}
\item In Section~\ref{sec:badtopology}, we give a 
lower bound for learning the topology of a tree from tree traces.  In particular, we construct a ``linked list'' tree with $n$ nodes and simulate
a string reconstruction problem by adding 
leaves below each of these nodes, either to the left or to the right of the main chain.  Learning traces from this tree is very similar to learning string traces.
\item In Section~\ref{secfuzzy}, We give an 
algorithm for learning traces using $\exp(\tilde{O}(n^{1/5}))$ samples.
There, we use the argument from Chase~\cite{Chase19lower}, but applied to our flattened tree traces. This bound is independent of learning the tree labels.
\item In Section~\ref{sec:Left-Propagation}, we show that $O((1-q)^{-n})$ samples are needed to reconstruct a tree whose traces are generated from the left-propagation deletion channel.
\end{itemize}

\section{Related Work}
The problem of string-trace reconstruction has received considerable attention since it was first introduced by \cite{BatuKKM04}.  Despite this there is still an exponential gap in the necessary number of traces and the sufficient number of traces needed to reconstruct an arbitrary input string \cite{Chase19lower,chase2020upper,DeOS19,HoldenL18,NazarovP17}.  The best lower bound thus far is due to Chase~\cite{Chase19lower}, where
he showed that $\tilde{\Omega}(\log^{5/2}n)$ traces are necessary, and
the best upper bound is due to Chase~\cite{chase2020upper},
where the same author showed that $\exp (\tilde{O}(n^{1/5}))$ traces are sufficient.  Previous to this, Nazarov and Peres~\cite{NazarovP17} and De~et~al.~\cite{DeOS19} independently achieved the upper bound of $\exp (O(n^{1/3}))$ traces.

Variants of trace reconstruction have been proposed in order to develop new techniques to close the exponential gap in the original problem formulation.  Davies~et~al.~\cite{DaviesRR19} investigated a new problem of reconstructing trees from their traces.  The authors investigated two classes of trees;  complete $k$-ary trees and spider graphs.  They also introduced the ``tree edit distance'' (TED) and ``left-propagation'' deletion channels.

For a complete $k$-ary tree $T$ the authors considered the general case for arbitrary $k$, as well as when $k$ is `large' compared to the total number of nodes.  Under TED, if $k \geq O(\log^2(n))$, then a complete $k$-ary tree can be reconstructed using $\exp(O(\log_k n)) \cdot T(k, 1/n^2)$ traces generated from $T$.  With  no restrictions on $k$, $\exp(O(k \log_k n)) $ traces suffice to reconstruct $T$.  Under left-propagation, if $k \geq O(\log \ n)$ then $T(O(\log_k n + k), 1/n^2))$ traces suffice to reconstruct T.  For arbitrary $k$  then $O(n^\gamma \log n)$ traces suffice, where $ \gamma=\ln \left(\frac{1}{1-q}\right)\left(\frac{c^{\prime} k}{\ln n}+\frac{1}{\ln k}\right)$. The authors ues primarily combinatorial arguments to achieve these bounds, which is in contrast to the current methods used in string trace reconstruction.  It was given as an open problem to extend these combinatorial arguments to general trees.  

The case of spider graphs was broken into small leg-depth and large leg-depth.  Note that for spider graphs, TED and left-propagation  are identical deletion channels. For depth $d \leq \log_{1/q} n $ and $q > 0.7$, then $\exp (O(d(nq^d)^{1/3})) $ traces suffice to reconstruct the spider.  For depth $d \geq \log_{1/q} n$ and all $q > 1$, then $2 T(d, 1/2n^2)$ traces suffice to reconstruct the spider. The algorithms proposed here generalize the mean-based methods of De~et~al.~\cite{DeOS19} and Nazarov and Peres~\cite{NazarovP17} to multiple independent strings, where some strings have a chance of being completely removed.

\section{Learning trees of known topology}\label{sec:treestotraces}

In this section, we show that it is indeed not the case that tree reconstruction can require more traces than the corresponding string reconstruction.  
In particular, in the case that the tree structure is known,
we reduce the problem of learning trees from traces with no overhead in the number of traces required
and only a multiplicative linear overhead in the time complexity.  We show this by giving a traversal of the tree that converts the tree to a string
in linear time such that node deletions  occur in their corresponding positions in the string.  
This allows string trace reconstruction algorithms to be used for tree trace reconstruction.


We start by giving the main lemma of this section, which (informally) shows that the order of nodes visited by an pre-order traversal
of a tree is preserved under deletions.  We note that a pre-order traversal is usually defined for binary trees; we define a pre-order traversal of general trees as recursively visiting the root first and then
all children left to right.

\begin{lemma}\label{lem:traversal}
Let $T$ be a tree and $T'$ be a tree obtained from $T$ via an arbitrary set of deletions under either the trace edit 
distance or the left propagation model.  
If $u, v$ are any two nodes in $T'$ such that an pre-order
traversal of $T'$ visits $u$ before $v$ then a pre-order traversal of $T$ will also visit $u$ before visiting $v$.
\end{lemma}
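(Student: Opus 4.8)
The plan is to prove this by induction on the number of deletions, which reduces the task to the single-deletion case: if $T'$ is obtained from $T$ by deleting exactly one node $w$, then any pre-order traversal of $T'$ visits nodes in an order consistent with a pre-order traversal of $T$. Once the single-deletion statement is established, the general case follows by composing deletions one at a time, since the relation ``$u$ precedes $v$ in pre-order'' is transitive and is preserved at each step.

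For the single-deletion step, I would first recall the combinatorial characterization of pre-order: for two nodes $u,v$ in a tree, a pre-order traversal visits $u$ before $v$ if and only if either $u$ is an ancestor of $v$, or there is a common ancestor $a$ of both (possibly with $u$ or $v$ equal to $a$'s child) such that the child of $a$ on the path to $u$ strictly precedes the child of $a$ on the path to $v$ in $a$'s left-to-right child order. Then I would examine how deleting $w$ changes ancestor relations and sibling orders. In the TED model, $w$'s children are spliced into $w$'s old position among $w$'s parent's children, preserving their relative order and their position relative to $w$'s former siblings; no other adjacencies change. In the left-propagation model, the effect is to delete the terminal node of the left-only path from $w$ and shift labels, which again only locally rearranges along one leftmost chain. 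In both cases I would argue: (i) if neither $u$ nor $v$ is involved in the splice region, their relative pre-order position is literally unchanged; (ii) if one or both lie among the promoted children (TED) or on the affected chain (left-propagation), their induced order in $T'$ is exactly the order they already had as descendants within the subtree rooted at $w$ (TED) or along the chain (left-propagation), which is consistent with $T$'s pre-order since pre-order on a subtree is the restriction of the global pre-order.

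Concretely, for TED I would fix $u,v \in T'$ with $u$ before $v$ in $T'$'s pre-order and split into cases by the location of the lowest common ancestor $a'$ of $u,v$ in $T'$. If $a' \neq \mathrm{parent}(w)$, then the local structure at $a'$ and along the two root-to-$u$, root-to-$v$ paths is identical in $T$ and $T'$ except possibly that a path passes ``through'' the deleted vertex — but passing through $w$ only inserts an extra ancestor in $T$, which does not reverse the branching order at $a'$. If $a' = \mathrm{parent}(w)$ and the relevant branches are among $w$'s promoted children, then in $T$ the common ancestor becomes $w$ itself (or stays $a'$ if only one branch is a promoted child), and since the promoted children keep their order, and children of $w$ in $T$ come after $w$ but before $w$'s right siblings in $T$'s pre-order, the precedence is preserved. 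The left-propagation case is analogous but simpler because only a leftmost path is disturbed.

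The main obstacle I expect is the careful bookkeeping in the TED case when exactly one of $u,v$ is a promoted child of $w$ (hence a new child of $w$'s parent in $T'$) and the other is one of $w$'s original siblings: I need to verify that the ``contiguous subsequence'' placement guaranteed by the TED model really does put the promoted children in the interval of sibling-slots formerly occupied by $w$, so that the branching order at $\mathrm{parent}(w)$ in $T'$ agrees with the branching order at $\mathrm{parent}(w)$ (and then at $w$) in $T$. This is where the precise statement of the deletion model — ``the deleted node's children take its place as a contiguous subsequence in the left-to-right order'' — must be invoked explicitly. Everything else is routine case analysis once the right characterization of pre-order precedence is in hand.
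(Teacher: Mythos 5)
Your proposal is correct and follows essentially the same route as the paper: reduce to a single deletion and observe that splicing the deleted node's children into its former position (the ``contiguous subsequence'' clause) cannot reorder any pair of surviving nodes in pre-order, then compose over the whole deletion set. Your write-up is in fact more careful than the paper's, which argues the TED case informally from a single figure and never explicitly supplies the induction over multiple deletions or the left-propagation case, both of which you address.
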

\begin{proof}
First, we consider the tree edit distance model.
Given the deletion of an arbitrary node $w$, we consider how it effects the traversal order of the remaining nodes.
Figure~\ref{fig:trees} shows the general case when a node $w$ is removed, with the subtrees numbered in the order they are visited according to a pre-order traversal.

\begin{figure}[h]
\begin{center}
\includegraphics[width=3.2in]{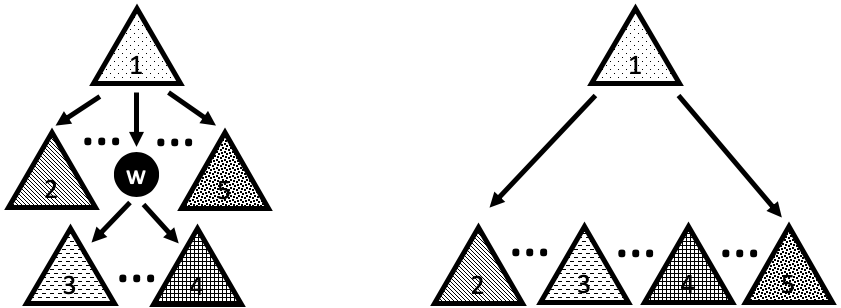}
\end{center}

\caption{The generic picture before (left) and after (right) node $w$ is removed.  
The subtrees trees are labeled in the order they are visited in a pre-order traversal. Note that on the left, $w$ is visited immediately 
before tree labeled with the number $3$.}\label{fig:trees}
\end{figure}

If $u$ and $v$ fall in the same subtree, their visitation order is clearly not affected.  Otherwise, if $u$ is in a tree that's visited
earlier than when the tree of $v$ is visited on the left of Figure~\ref{fig:trees}, then it is also visited earlier on the right.
This simple illustration captures all of the cases (i.e.\ pairs of trees where $u$ and $v$ can occur), which finishes the first part of the proof.
\end{proof}

Let $T'$ be a trace of $T$ and let $v_1 \ldots v_{n'}$ be the nodes of $T'$ indexed by the order they are visited in a pre-order traversal.
Let 
$$s(T') = \ell(v_1)\ell(v_2) \ldots \ell(v_{n'}).$$
Let $T'_1, T'_2, \ldots T'_m$ be traces of $T$.  This procedure allows us to obtain $m$ strings $$s(T'_1), s(T'_2), \ldots s(T'_m).$$ 
Using Lemma~\ref{lem:traversal}, if $m$ is sufficiently large, we can run a string trace reconstruction algorithm on this sequence of strings $m$ to obtain
the string $s(T)$, which allows us to label $T$'s nodes in the order given by a pre-order traversal.

Hence, we can conclude the following.

\begin{theorem}
Let $T$ be any tree whose topology is known, its labels can be reconstructed using $T(n, q, \delta)$ traces in the worst case.
\end{theorem}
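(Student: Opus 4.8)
The plan is to assemble the pieces already developed in this section into a clean reduction. The key observation is that the map $T' \mapsto s(T')$ defined just above the theorem statement is compatible with the deletion channel: a single node deletion in the tree corresponds to deleting exactly the bit carried by that node from the string $s(T')$, and Lemma~\ref{lem:traversal} guarantees that the relative order of the surviving bits is unchanged. So first I would argue that if $T'$ is obtained from $T$ by deleting each node independently with probability $q$, then $s(T')$ is distributed exactly as a trace of the string $s(T)$ through the deletion channel with the same parameter $q$. This follows because (i) the nodes that survive in $T'$ are exactly an i.i.d.\ $q$-thinning of the nodes of $T$, hence the surviving labels are an i.i.d.\ $q$-thinning of the bits of $s(T)$, and (ii) by Lemma~\ref{lem:traversal} those surviving labels appear in $s(T')$ in the same left-to-right order as in $s(T)$. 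Points (i) and (ii) together are precisely the definition of a string trace of $s(T)$.

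Next I would invoke the known string trace reconstruction result: with $m = T(n, q, \delta)$ independent traces, there is an algorithm that recovers an arbitrary $n$-bit string with probability at least $1-\delta$. Given $m$ tree traces $T'_1, \ldots, T'_m$ of $T$, we compute $s(T'_1), \ldots, s(T'_m)$ in linear time each (the pre-order traversal), feed them to the string reconstruction algorithm, and recover $s(T)$ with probability $\ge 1-\delta$. Because the topology of $T$ is known, we know in advance the order in which a pre-order traversal of $T$ visits its $n$ nodes; the recovered string $s(T) = \ell(v_1)\cdots\ell(v_n)$ then assigns to each node its correct label, so we have fully reconstructed the labeled tree $T$. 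The trace count is exactly $T(n,q,\delta)$ with no overhead, and the extra running time is the $O(n)$ per-trace traversal cost, matching the claims made in the section's preamble.

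The only genuine subtlety — and the one place I'd be careful — is step (i)–(ii) above: verifying that applying the tree deletion channel and then flattening yields the \emph{same distribution} as flattening and then applying the string deletion channel. The independence of deletions transfers immediately since each node/bit is kept independently with probability $1-q$; the content of Lemma~\ref{lem:traversal} is exactly what is needed to see that no reordering can occur, so this is really a one-line argument once the lemma is in hand. I would also note for completeness that the theorem is stated for the TED model (the lemma covers left-propagation as well, so the same conclusion holds there). No step here is a serious obstacle; the work was done in proving Lemma~\ref{lem:traversal}, and the theorem is the bookkeeping that turns it into a black-box reduction to string trace reconstruction.

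\begin{proof}
Fix a pre-order traversal of $T$; since the topology of $T$ is known, we know the sequence $v_1, \ldots, v_n$ of nodes in the order this traversal visits them, so that $s(T) = \ell(v_1)\ell(v_2)\cdots\ell(v_n)$ is the string we wish to determine (equivalently, once $s(T)$ is known, the labeled tree $T$ is fully reconstructed).

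We claim that for a single trace $T'$ of $T$ under the TED model, the string $s(T')$ is distributed exactly as a trace of the string $s(T)$ under the deletion channel with the same deletion probability $q$. Indeed, the set of nodes surviving in $T'$ is obtained by keeping each node of $T$ independently with probability $1-q$; hence the multiset of surviving labels is an i.i.d.\ $(1-q)$-thinning of the bits of $s(T)$, which is exactly the survival pattern of a string trace. Moreover, by Lemma~\ref{lem:traversal}, if the pre-order traversal of $T'$ visits a surviving node $u$ before a surviving node $v$, then the pre-order traversal of $T$ also visits $u$ before $v$; consequently the surviving labels appear in $s(T')$ in the same left-to-right order as they do in $s(T)$. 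Keeping each bit independently with probability $1-q$ and preserving the order of the kept bits is precisely the definition of a string trace of $s(T)$, which proves the claim.

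Now let $T'_1, \ldots, T'_m$ be $m = T(n, q, \delta)$ independent traces of $T$. Since the traces are independent, the strings $s(T'_1), \ldots, s(T'_m)$ are $m$ independent traces of $s(T)$ by the claim above. By the definition of $T(n,q,\delta)$, there is a string trace reconstruction algorithm that, given these $m$ traces, outputs $s(T)$ with probability at least $1-\delta$. Computing each $s(T'_i)$ from $T'_i$ takes time $O(n)$ via a pre-order traversal, so the total additional running time over the string reconstruction algorithm is $O(mn)$. Having recovered $s(T) = \ell(v_1)\cdots\ell(v_n)$ and knowing $v_1, \ldots, v_n$ from the known topology, we assign each node its label and thereby reconstruct $T$. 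Thus $T(n,q,\delta)$ traces suffice in the worst case.
\end{proof}
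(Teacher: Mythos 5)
Your proposal is correct and follows essentially the same reduction as the paper: flatten each tree trace to a string via the pre-order traversal, use Lemma~\ref{lem:traversal} to see that the surviving labels keep their order so the flattened traces are genuine string traces of $s(T)$, run a string trace reconstruction algorithm, and use the known topology to reassign labels. You are merely more explicit than the paper about the distributional equivalence between flatten-then-delete and delete-then-flatten, which is a welcome but not a different argument.
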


Using the reduction above and applying the results of Chase~\cite{chase2020upper} for recovering any string from  $\tilde{O}(\mathrm{exp}(n^{1/5})$ traces, we get the following.

\begin{corollary}\label{cor:knowntopo}
Let $T$ be any tree whose topology is known, its labels can be reconstructed using $\tilde{O}(\mathrm{exp}(n^{1/5})$ traces in the worst case.
\end{corollary}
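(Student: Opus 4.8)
The plan is to chain together the two results that precede this corollary: the reduction theorem, which says that for a tree $T$ of known topology the labels can be recovered from $T(n,q,\delta)$ tree traces, and the black-box string trace reconstruction upper bound of Chase~\cite{chase2020upper}, which says that $T(n,q,\delta) \le \exp(\tilde O(n^{1/5}))$ for any fixed deletion probability $q$ and constant failure probability $\delta$. Substituting the latter bound into the former immediately yields that $\exp(\tilde O(n^{1/5}))$ tree traces suffice to reconstruct the labels of $T$.

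In slightly more detail, I would first invoke the preceding theorem to fix an algorithm $\mathcal{A}$ that, given $m \ge T(n,q,\delta)$ tree traces $T'_1,\dots,T'_m$ of $T$, outputs $s(T)$ with probability at least $1-\delta$; the mechanism is to apply the pre-order flattening $T'_i \mapsto s(T'_i)$ from Lemma~\ref{lem:traversal}, observe that each $s(T'_i)$ is distributed exactly as a string trace of $s(T)$ under deletion rate $q$ (this is the content of Lemma~\ref{lem:traversal}, which guarantees deletions land in the corresponding string positions), and run a string trace reconstruction algorithm. Then I would substitute Chase's bound $T(n,q,\delta) = \exp(\tilde O(n^{1/5}))$ in place of $T(n,q,\delta)$. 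Since the topology of $T$ is known, recovering $s(T)$ together with the known topology determines $T$ entirely, so this also reconstructs $T$.

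There is essentially no obstacle here: the corollary is a direct specialization of the theorem using a citable external bound, and the only things to be careful about are bookkeeping — namely that Chase's bound is stated for constant $q$ (so the $\tilde O$ hides a $q$-dependent constant, consistent with how the paper already suppresses the $q$ dependence) and that the $\tilde O$ notation absorbs polylogarithmic factors in $n$. I would simply state that combining the reduction with~\cite{chase2020upper} gives the claimed bound, perhaps noting in one sentence that the linear time overhead of the reduction does not affect the sample complexity. No new ideas or calculations are required.
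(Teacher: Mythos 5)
Your proposal is correct and matches the paper's own argument: the corollary is obtained exactly by plugging Chase's $\exp(\tilde{O}(n^{1/5}))$ bound on $T(n,q,\delta)$ into the preceding reduction theorem, with the pre-order flattening of Lemma~\ref{lem:traversal} guaranteeing that tree traces become string traces of $s(T)$. No further comment is needed.
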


Similarly, applying the known results of 
Hartung~et~al.~\cite{HartungHP18}
 for recovering a random string from $O(\mathrm{exp}({C \log^{1/3}n}))$ traces, we obtain the following
as a corollary.

\begin{corollary}
Let $T$ be any tree whose topology is known, if $T$'s nodes are labeled uniformly at random, its labels can be reconstructed using  
$O(\mathrm{exp}(C \log^{1/3}n)$ traces.
\end{corollary}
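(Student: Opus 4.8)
The plan is to reuse, essentially verbatim, the reduction of this section, this time tracking the \emph{distribution} of the labels rather than treating them as adversarial. Since the topology of $T$ is known, fix a pre-order traversal of $T$; this is a fixed bijection $\pi$ from the $n$ nodes of $T$ onto the positions $\{1,\dots,n\}$, chosen before any randomness is revealed. If each label $\ell(v)$ is drawn independently and uniformly from $\{0,1\}$, then the string $s(T) = \ell(\pi^{-1}(1))\,\ell(\pi^{-1}(2))\cdots\ell(\pi^{-1}(n))$ is a uniformly random element of $\{0,1\}^n$, because $\pi$ is a bijection and the coordinates are independent.

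Next I would check that each tree trace pushes forward to a genuine string trace with the same deletion rate. Let $T'_1,\dots,T'_m$ be independent TED traces of $T$ with deletion probability $q$, and form $s(T'_1),\dots,s(T'_m)$ as above. By Lemma~\ref{lem:traversal}, a pre-order traversal of each $T'_i$ visits the surviving nodes in exactly the order given by $\pi$ restricted to those nodes; hence $s(T'_i)$ is obtained from $s(T)$ by deleting, independently with probability $q$, precisely the positions $\pi(w)$ for the nodes $w$ removed from $T$. In other words, conditioned on $s(T)$, the strings $s(T'_1),\dots,s(T'_m)$ are i.i.d.\ ordinary string traces of $s(T)$ with deletion rate $q$. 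So the task reduces to average-case string trace reconstruction of a uniformly random length-$n$ string.

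Finally I would invoke the known average-case upper bound: by Holden~et~al.~\cite{HoldenPP18} (see also Hartung~et~al.~\cite{HartungHP18}), $O(\exp(C\log^{1/3}n))$ traces suffice to recover a uniformly random string in $\{0,1\}^n$ with probability $1-\delta$. Running that algorithm on $s(T'_1),\dots,s(T'_m)$ recovers $s(T)$, and then setting $\ell(v) = s(T)_{\pi(v)}$ for every node $v$ recovers all of $T$'s labels, incurring only the linear traversal overhead of the reduction of this section.

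I do not expect a real obstacle here; the work is all carried by Lemma~\ref{lem:traversal} together with the elementary observation that a uniform node-labeling pushes forward through $\pi$ to a uniform string. The only two points that warrant a sentence of care are that $\pi$ is fixed independently of which nodes survive (so that order-preservation is genuinely all we need from Lemma~\ref{lem:traversal}), and that the label randomness is independent of the deletion randomness, so that the conditional law of the traces given $s(T)$ really is the string deletion channel.
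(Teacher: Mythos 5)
Your proposal is correct and follows essentially the same route as the paper: the pre-order-traversal reduction of Lemma~\ref{lem:traversal} turns the uniformly labeled tree into a uniformly random string whose TED traces are exactly i.i.d.\ string traces, after which the average-case bound of Holden~et~al.\ / Hartung~et~al.\ is applied as a black box. The extra care you take in checking that the random labeling pushes forward to a uniform string and that label and deletion randomness are independent is a sound (if implicit in the paper) elaboration, not a departure.
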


A natural open question is what happens in trees whose topology is not known.  In that case, it makes sense to attempt to 
recover the topology.

\section{A lower bound on learning the topology}\label{sec:badtopology}

One may then ask about the problem of reconstructing the topology of an unknown tree. For this problem, we can assume all the labels are the same, as to give the least possible information to the learner. We call such a tree an unlabeled tree.

For the lower bound we show there exists an unlabelled tree that naturally encodes the string trace reconstruction problem. Given a string $S$ we construct the tree $T_S$ in two stages. First, create a path tree with $|S| + 2\ell$ vertices, where $\ell = \Theta\left( \frac{\ln (1/\delta) +  \ln(T(n, \delta))}{\ln(1 / q)}\right)$. Next, for each bit $S_i \in S$, if $S_i = 0$ add a left child to the node in position $(\ell+i)$ in the path. Otherwise $S_i = 1$ so add a right child to node $(\ell+i)$ in the path. Call the subtree which includes left and right leaves the \textbf{encoding} of $S$. We now show that this family of trees is sufficiently hard to learn under the TED model.

\begin{lemma}
Let $T_S$ be a tree constructed from a string $S$ using the process defined above. Under the TED deletion channel $T_S$ requires $\Omega(T(n, q^2, \delta))$ traces for reconstruction.
\end{lemma}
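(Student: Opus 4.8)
My plan is a reduction: I show that any algorithm reconstructing $T_S$ from few traces yields an algorithm reconstructing the length-$n$ string $S$ from equally many traces under the rate-$q^2$ deletion channel. Suppose $A$ reconstructs $T_S$ from $m$ rate-$q$ TED traces with probability $1-\delta$, simultaneously for every $S\in\{0,1\}^n$. I would build an algorithm $B$ that, given $m$ i.i.d.\ rate-$q^2$ traces $t_1,\dots,t_m$ of an unknown $S$, converts each $t_j$ into a (genuine) rate-$q$ TED trace $\tau_j$ of $T_S$ using only fresh internal randomness, runs $A(\tau_1,\dots,\tau_m)$, and reads $S$ off the left/right pattern of the encoding leaves of the returned tree. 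Since $B$ then reconstructs $S$ from $m$ rate-$q^2$ traces with probability $1-O(\delta)$, the definition of $T(n,q^2,\delta)$ forces $m=\Omega(T(n,q^2,\delta))$, which is the claim.

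All the content is in the conversion $t_j\mapsto\tau_j$, and I would build it on two facts. First, a bit of $S$ influences a rate-$q$ trace of $T_S$ only through its pendant leaf, which survives with probability $1-q$; so $B$ first further-deletes each bit of $t_j$ independently with probability $q/(1+q)$, producing a string $t_j'$ distributed exactly as a rate-$q$ string trace of $S$ (the composed survival probability is $(1-q^2)\cdot\tfrac{1}{1+q}=1-q$). Second, which of the $n+2\ell$ path nodes survive is independent of everything about $S$, and, conditioned on the number $r$ of surviving leaves, the surviving leaf positions are uniform among $r$-subsets of the encoding positions. Using these, $B$ samples the path-node survivals, lays the bits of $t_j'$ onto a uniformly random $r$-subset of the encoding positions, and then applies the TED restructuring forced by these deletions to the (fixed, known) topology of $T_S$. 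The delicate piece of the restructuring is a maximal run of surviving leaves whose intervening path nodes were all deleted: unwinding the deletions one path node at a time, in the spirit of Lemma~\ref{lem:traversal}, shows such a run collapses onto its nearest surviving ancestor with the $0$-labelled leaves forming a prefix and the $1$-labelled leaves a suffix of its child list and the continuing path between them, data determined by the labels of $t_j'$ and the chosen placement, hence computable by $B$. The buffers of length $\ell$ ensure no leaf floats above the (possibly deleted) root and disconnects the tree: this would require all path nodes above some encoding node to be deleted, an event of probability at most $nq^\ell$ per trace, and the choice $\ell=\Theta\bigl(\tfrac{\ln(1/\delta)+\ln T(n,\delta)}{\ln(1/q)}\bigr)$ makes $m\cdot nq^\ell\le\delta$ whenever $m\le T(n,\delta)$ (and if $m>T(n,\delta)\ge T(n,q^2,\delta)$ the lemma is immediate); a union bound over the $m$ traces keeps $B$'s success probability at least $1-2\delta$.

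The main obstacle I anticipate is justifying that the trees $\tau_j$ really are distributed as genuine rate-$q$ TED traces of $T_S$, despite $B$'s ignorance of where the surviving bits of $t_j$ sat inside $S$: one must show that the law of a trace of $T_S$ depends on the surviving leaves' labels only through their left-to-right sequence (together with the $S$-independent slot structure cut out by the surviving path nodes), so that replacing the true positions by a uniform placement is faithful. Making this precise is where one has to exploit that the two length-$\ell$ buffers render the encoding region impossible to localize within a trace, and it is the only non-mechanical step; once it is established, chaining it with the thinning identity of the previous paragraph and the disconnection-probability bound completes the argument that $m=\Omega(T(n,q^2,\delta))$.
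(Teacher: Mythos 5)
Your reduction runs in the simulation direction: you must manufacture, from a rate-$q^2$ string trace of $S$ alone, a tree $\tau_j$ whose law is exactly that of a rate-$q$ TED trace of $T_S$. The claim you defer to the end --- that the law of a trace of $T_S$ depends on the surviving leaves' labels only through their left-to-right sequence together with an $S$-independent slot structure --- is false, and this is fatal. The surviving path nodes pin down (exactly, or up to binomial noise) the original positions of the surviving leaves, and those positions are correlated with the labels through $S$. Concretely, condition on the event that every path node survives (the learner can verify this by counting a backbone of length $n+2\ell$): the genuine trace then exhibits the pair $(i,S_i)$ for every surviving leaf $i$, so a trace showing a left-oriented leaf at encoding position $i$ has probability zero whenever $S_i=1$. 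Your simulator, which lays the observed orientation subsequence onto a uniformly random $r$-subset of encoding positions, produces exactly such impossible traces with positive probability (take $S=01$: conditioned on one surviving leaf, you place orientation $0$ at position $2$ a quarter of the time). Hence $\tau_j$ is not a genuine trace of $T_S$ and the guarantee of $A$ does not apply to it. Worse, the defect cannot be repaired by a cleverer placement rule: since the conditional law of the tree trace given the string trace genuinely depends on $S$, the tree-trace channel is not a degradation of the rate-$q^2$ string channel, so no $S$-oblivious kernel $t_j\mapsto\tau_j$ can be faithful for all $S$. The buffers do not help here; they only guarantee that a buffer node survives on each side, and do nothing to hide the relative backbone positions of the surviving leaves.

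The paper argues in the opposite information-theoretic direction: it grants the learner an oracle that restores every leaf/parent pair that is not entirely deleted, notes that the oracle-equipped learner is at least as powerful as the real one, and argues that the oracle-equipped learner still faces (at best) a string reconstruction problem in which each bit is erased independently with probability $q^2$, whence the $\Omega(T(n,q^2,\delta))$ bound. That genie argument never needs to simulate a tree trace from a string trace, which is precisely the step that cannot be carried out. If you want to salvage a simulation-style proof, you would have to either supply the string-reconstruction instance with side information matching the positional leakage of the tree trace, or argue directly that this leakage cannot push the sample complexity below $T(n,q^2,\delta)$; neither appears in your write-up.
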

\begin{proof}
By the Hoeffding bound, we can recover the longest path in $T_S$ with probability $1-\delta$ using $O(n \log(1/ \delta))$ samples. This gives the `backbone' of the string encoding. Choosing $$\ell = \Theta\left( \frac{\ln (1/\delta) +  \ln(T(n, \delta))}{\ln(1 / q)}\right)$$ ensures that with probability $1 - \delta$ in all $T(n, \delta)$ independent trials, at least one of the vertices survive in both the upper and lower `buffer sections' of the tree.
We know which positions in the backbone the leaves will occur in $T_S$, thus it suffices to learn the orientation of the leaves with high probability. 
\begin{figure}[h!]
\begin{center}
\includegraphics[width=.8in]{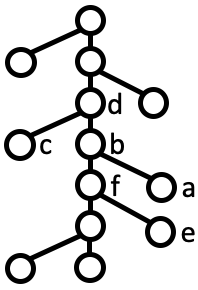}
\end{center}
\caption{Example configuration of a leaf node $a$ and the leaves nearest it. The survival of $a$ is dependent on the nodes in its 3-neighborhood.}\label{lowerb}
\end{figure}

For this proof, we refer to Figure~\ref{lowerb} which shows a possible configuration of the main path and leaves oriented on either side. The exact orientations of the leaves do not matter in the following analysis. Say node $a$ is completely removed in a trace if both it and it's parent are deleted from the $T_S$.  The probability that $a$ is completely removed in a trace is dependent on its parent being removed as well. In general there may be many ways leaves could be arranged in a trace if they are not completely deleted. Suppose there existed some oracle that given a trace generated from $T_S$ could replace `backbone' nodes and leaf nodes that were only partially deleted. Could we hope to reconstruct the original tree with fewer traces using this oracle? Here, nodes are removed with probability $q^2$. Thus even an algorithm endowed with this oracle cannot place these leaves in the tree, so leaves survive with probability $1 - q^2$. 

There is a 1-to-1 correspondence between leaf position in the tree and bit value in a string. Further, the survival of a bit with the oracle learner is $O(p^2)$, so the final algorithm would require $\Omega(T(n, q^2, \delta))$ traces to reconstruct $T_S$.  This is the best any algorithm could hope to do, as nodes that are completely removed by definition do not leave any information in the trace.  

\end{proof}

This gives a natural lower bound to the tree trace reconstruction problem, as we have no guarantee on the input tree to the problem.
\begin{theorem}
For any two trees $T_R$ and $T_S$ encoding binary strings $R$ and $S$ respectively, then $\Omega(T(n, q^2, \delta))$ traces are required to distinguish between $T_R$ and $T_S$.
\end{theorem}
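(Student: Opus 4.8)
The plan is to leverage the previous lemma essentially verbatim, since the theorem is a mild restatement of it in the language of distinguishability rather than reconstruction. The key observation is that if two string families $R$ and $S$ could be distinguished from $o(T(n,q^2,\delta))$ traces of their encoding trees $T_R$ and $T_S$, then we could solve the corresponding string reconstruction problem with too few traces, contradicting the definition of $T(\cdot,\cdot,\cdot)$. So the first step is to set up the reduction carefully: given an unknown string $X \in \{R, S\}$ (or more generally an unknown string that we wish to reconstruct), form the tree $T_X$ via the two-stage construction in the section, feed traces of $T_X$ to the hypothetical distinguisher, and argue that correctly distinguishing $T_R$ from $T_S$ is exactly as hard as the underlying bit-level reconstruction.

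Next I would make precise why the effective deletion probability on the ``information-carrying'' leaves is $q^2$ rather than $q$. As in the lemma's proof, a leaf $a$ hanging off backbone position $\ell + i$ leaves no recoverable information in a trace precisely when both $a$ and its backbone parent are deleted, an event of probability $q^2$ (and we invoke the oracle argument from the lemma to say that even partial deletions of the backbone and neighboring leaves can be cleaned up for free, so the only genuine loss is the joint deletion). Conditioned on the $O(n\log(1/\delta))$-sample recovery of the backbone (the Hoeffding step) and on at least one survivor in each buffer section of length $\ell = \Theta\!\left(\frac{\ln(1/\delta) + \ln T(n,\delta)}{\ln(1/q)}\right)$ across all trials — both of which hold with probability $1-\delta$ — the remaining task is exactly a string trace reconstruction instance on $n$ bits with deletion rate $q^2$. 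Hence any distinguisher uses $\Omega(T(n,q^2,\delta))$ traces.

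Finally I would phrase the contrapositive to close the argument: the definition of $T(n, q^2, \delta)$ says no algorithm reconstructs an arbitrary $n$-bit string under deletion rate $q^2$ with fewer than that many traces and success probability $1-\delta$; since distinguishing $T_R$ from $T_S$ for adversarially chosen $R, S$ is at least as hard as this reconstruction (choose $R$ and $S$ to be a pair that any reconstruction algorithm confuses when given too few traces), the same lower bound transfers, giving the claimed $\Omega(T(n,q^2,\delta))$. The main obstacle I anticipate is not any single calculation but making the ``oracle'' reduction fully rigorous: one must argue that the oracle which repairs partially-deleted backbone and leaf nodes can actually be simulated (or that its absence only helps the adversary), so that the $q^2$ bound is tight and not merely an artifact of a convenient but unjustified oracle. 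Stating exactly what the oracle is allowed to do, and checking that the leaf-orientation information is lost \emph{only} under the joint-deletion event, is the delicate part; once that is pinned down, the rest follows directly from Lemma~\ref{lem:traversal}'s companion argument and the definition of $T$.
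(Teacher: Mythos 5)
Your proposal is correct and follows essentially the same route as the paper: the theorem is just the preceding lower-bound lemma restated as a distinguishing statement, and you invoke that lemma's construction (buffers, backbone recovery, oracle repair of partial deletions, and the $q^2$ joint-deletion event) exactly as the paper intends. Your added care about the quantifier (picking an adversarial hard pair $R,S$) and about justifying the oracle only strengthens what the paper leaves implicit; the only slip is citing the traversal lemma of Section~3 at the end, where you mean the lower-bound lemma of Section~4.
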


Using the above argument along with the results of Chase~\cite{chase2020upper}, it follows $T_S$ can be reconstructed using $\tilde{O}(\mathrm{exp}(n^{1/5})$ traces.

\begin{corollary}
\label{uppern13}
Let $T_S$ be a tree encoding of binary string $S$. Then for any fixed deletion probability $q$,  $\tilde{O}(\mathrm{exp}(n^{1/5})$ traces suffice to reconstruct $T_S$ with probability $1- \delta$.
\end{corollary}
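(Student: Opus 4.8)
The plan is to combine the lower-bound construction of the previous lemma with the reduction machinery already developed in Section~\ref{sec:treestotraces}, now run in the ``forward'' direction: rather than showing $T_S$ is hard, we show it is no harder than ordinary string reconstruction, so that Chase's $\exp(\tilde O(n^{1/5}))$ upper bound transfers. First I would recall that $T_S$ consists of a path ``backbone'' of length $|S| + 2\ell$ together with one pendant leaf hanging off each of the middle $|S|$ backbone nodes, oriented left or right according to the corresponding bit of $S$. The algorithm has two phases. In Phase~1 we reconstruct the backbone: since the backbone is the unique longest root-to-leaf path and each of its nodes survives a trace with probability $1-q$, a Hoeffding/Chernoff argument (exactly as in the lemma) shows that $O(n\log(1/\delta))$ traces suffice to identify the backbone length, hence to locate the $|S|$ positions at which leaves occur, and in particular the buffer length $\ell = \Theta\big((\ln(1/\delta) + \ln T(n,\delta))/\ln(1/q)\big)$ guarantees the buffers survive in every one of the later trials with the failure probability folded into $\delta$.

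Next I would set up Phase~2, the orientation-learning phase, via a reduction to string trace reconstruction on an alphabet of size three (or, after a standard binary encoding, on $\{0,1\}$). Fix a trace $T_S'$ of $T_S$. By Lemma~\ref{lem:traversal}, a pre-order traversal of $T_S'$ visits the surviving nodes in an order consistent with the pre-order traversal of $T_S$. In the pre-order traversal of $T_S$, backbone node $i$ is immediately followed by its pendant leaf (if the leaf is a left child it is the first child and is visited right after node $i$ and before the continuation of the path; if it is a right child it is visited after node $i$'s path-child subtree, but since that subtree is itself just the tail of the path, the leaf still appears adjacent to where the path ``turns'' — this adjacency is the key combinatorial observation to nail down carefully). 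So I would read off from $T_S'$ a ternary string over $\{B, L, R\}$ (backbone, left-leaf, right-leaf) recording the traversal, and argue that this string is distributed exactly as a trace of the fixed ternary string $w(S) = B^\ell\, x_1 B\, x_2 B \cdots x_{|S|} B\, B^{\ell-1}$ through the $q$-deletion channel, where $x_i \in \{L,R\}$ encodes $S_i$. Here each symbol of $w(S)$ corresponds to one node of $T_S$ and is deleted independently with probability $q$, which is precisely the string deletion channel. Trace reconstruction over a constant-size alphabet is polynomially equivalent to the binary case (encode each symbol in $O(1)$ bits), so Chase's bound gives $\exp(\tilde O(n^{1/5}))$ traces to recover $w(S)$, hence $S$, hence the orientations.

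Putting the phases together: run $N = \exp(\tilde O(n^{1/5}))$ independent traces; use them first to fix the backbone and then feed the induced ternary strings to the string-reconstruction algorithm; a union bound over the $O(1)$ failure events (backbone wrong, buffer dies, string algorithm fails) keeps total failure probability below $\delta$ after adjusting constants. This yields Corollary~\ref{uppern13}. The main obstacle I anticipate is Phase~2's adjacency claim: I need to verify that under the TED channel the pre-order position of a surviving leaf relative to the surviving backbone nodes is determined purely by which of those nodes survive — in particular that a left-leaf of node $i$ never gets ``reordered'' past a surviving backbone node, and that a right-leaf of node $i$ whose parent $i$ dies attaches to the nearest surviving ancestor in a way that still keeps it between backbone positions $i$ and $i+1$ in pre-order. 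This is where Lemma~\ref{lem:traversal} does the heavy lifting (it guarantees the relative pre-order of any two survivors is preserved), so the argument should go through, but the bookkeeping for the ``turning point'' of the path — and the degenerate cases where several consecutive backbone nodes die — is the part that needs care rather than the probability estimates, which are routine.
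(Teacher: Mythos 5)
Your overall strategy---identify the backbone using the buffers, then reduce orientation-learning to string trace reconstruction over a constant alphabet and invoke Chase's $\exp(\tilde{O}(n^{1/5}))$ bound---is exactly the spirit of the paper's (very terse) justification, which simply combines the preceding lemma's argument with Chase's result. However, there is a concrete error in your Phase~2 as written. You claim the ternary string read off a trace by pre-order traversal is distributed exactly as a deletion-channel trace of $w(S) = B^\ell\, x_1 B\, x_2 B \cdots x_{|S|} B\, B^{\ell-1}$. That target string is wrong: in a pre-order traversal a \emph{right} pendant leaf of backbone node $i$ is visited only after the entire subtree rooted at node $i$'s path-child, i.e.\ after \emph{all} lower backbone nodes and their subtrees. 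So even with zero deletions the pre-order string of $T_S$ is not $w(S)$; all right leaves are pushed to the tail of the traversal in reverse depth order, and your hoped-for ``adjacency'' (a right leaf staying between backbone positions $i$ and $i+1$ in pre-order) fails outright, not merely in degenerate cases. Consequently the ``distributed exactly as a trace of $w(S)$'' step, which your reduction leans on, is false as stated.

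The repair is straightforward and keeps your architecture intact: take the target of the string-reconstruction phase to be the \emph{actual} pre-order string $P(S)$ of $T_S$ over $\{B,L,R\}$ (a fixed string determined by $S$, from which $S$ is trivially recoverable), classifying each surviving node as backbone / left-leaf / right-leaf from the trace's tree structure (possible with high probability because the bottom buffer keeps the backbone continuation alive below every encoding position, and TED preserves left--right sibling order). Then Lemma~\ref{lem:traversal}, together with the one-node-per-symbol correspondence, does give that each trace's pre-order string is (up to the $\delta$-probability buffer-failure events) an i.i.d.\ $q$-deletion trace of $P(S)$, and the constant-alphabet-to-binary reduction plus Chase yields Corollary~\ref{uppern13}. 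So the gap is in the specific distributional claim about $w(S)$, not in the overall plan; as submitted, though, the key step of Phase~2 does not hold.
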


\section{Recovering fuzzy trees of degree $\tilde{O}(n^{1/5})$}
\label{secfuzzy}
For any tree, call a leaf terminal if all of its siblings are leaves.  We define a \emph{fuzzy tree of degree m} as a tree where all terminal leaves have $m-1$ siblings.

Recall the result from \cite{Chase19lower} that gives $T(n, \delta) \leq \exp(\tilde{O}(n^{1/5}))$.

\begin{theorem}
\label{fuz}
Given a fuzzy tree $A$ of degree $\tilde{O}(n^{1/5})$ and probability $\delta$, we can reconstruct $A$ using $\tilde{O}(n^{1/5})$ traces with high probability.
\end{theorem}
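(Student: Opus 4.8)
The plan is to reduce the reconstruction of the fuzzy tree $A$ to a string problem via the pre-order flattening of Section~\ref{sec:treestotraces}, and then to exploit the degree-$m$ redundancy to drive the effective noise on the informative part of the instance down to a polynomially small level, so that a sub-polynomial number of traces suffices. First I would invoke Lemma~\ref{lem:traversal}: the pre-order order of surviving nodes in any trace is consistent with the pre-order order in $A$, so each trace $T'$ yields a string $s(T')$ whose symbols appear in a fixed left-to-right order. Since $A$ is a fuzzy tree of degree $m$, its terminal leaves occur in sibling bunches of exactly $m$ leaves, and in the flattened string each bunch appears as a contiguous run. The strategy is to treat these runs as robust markers and the remaining non-redundant ``skeleton'' symbols as the part that must be recovered.

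Next I would quantify how robust a bunch is when $m = \tilde{\Theta}(n^{1/5})$. A bunch of $m$ terminal leaves loses all of its members in a single trace only with probability $q^{m} = \exp(-\Omega(m)) = n^{-\omega(1)}$, which is super-polynomially small precisely because $m \gg \log n$. Taking a union bound over the at most $n$ bunches of $A$ and over the $\tilde{O}(n^{1/5})$ traces drawn, with probability $1-\delta$ every bunch leaves at least one surviving leaf in every trace; moreover, by a Hoeffding bound on the $\mathrm{Bin}(m,1-q)$ survivor count, the number of survivors in each bunch concentrates around $(1-q)m$. Thus each informative feature of $A$ behaves as though it passed through a deletion channel of \emph{effective} rate $q^{m}\approx 0$ rather than rate $q$, which is exactly the leverage needed to aim for a polynomial rather than exponential sample count.

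With the bunches available as essentially noiseless anchors in every trace, the remaining task is to pin down the skeleton, namely the arrangement of the bunch-parents and the non-redundant internal nodes between them. Here I would run the mean-based distinguishing test of Chase~\cite{Chase19lower} and Nazarov and Peres~\cite{NazarovP17} on the flattened strings, but applied to the block-redundant instance: for two candidate fuzzy trees whose flattened strings are $x$ and $y$, I form $A(z) = \sum_k (x_k - y_k) z^k$ and locate a point $z$ on a short arc of the unit circle near $1$ where $|A(z)|$ is large. Because two distinct fuzzy trees of degree $m$ differ by the presence, absence, or side of entire $m$-leaf blocks, the coefficient differences occur in blocks of weight $\Theta(m)$; together with the near-zero effective deletion rate on the anchors, this amplifies the gap $\max_j \left| \mathbb{E}[\tilde{X}_j - \tilde{Y}_j] \right|$ and should push it above $1/\poly(n)$. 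Summing the surviving estimators then distinguishes $x$ from $y$ using only $\tilde{O}(n^{1/5})$ traces.

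The main obstacle is this final step: converting the $m$-fold block redundancy into a genuinely polynomial lower bound on the expectation gap. The delicate point is that backbone deletions still shift the positions of the anchors from trace to trace, so before the mean-based estimator can be read off coordinate by coordinate I must argue that the surviving anchors can be aligned consistently across traces, and that the amplified analogue of the polynomial estimate used by Nazarov and Peres~\cite{NazarovP17} survives replacing $\{-1,0,1\}$ coefficients by block-structured coefficients of magnitude up to $m$. I expect this alignment-plus-amplification estimate, rather than the flattening or the concentration of bunch sizes, to be the crux of the argument.
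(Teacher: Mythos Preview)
Your target is not what the paper actually proves. The paper's own argument samples $T(n,\delta)=\exp(\tilde{O}(n^{1/5}))$ traces (not a polynomial $\tilde{O}(n^{1/5})$ of them) and does a straight black-box reduction to string trace reconstruction. It encodes the unknown topology as two binary strings $S_0,S_1$ by walking the tree in preorder and recording, respectively, ``up-edge vs.\ leaf'' and ``down-edge vs.\ leaf'' symbols. The role of the fuzzy-of-degree-$m$ hypothesis with $m=\tilde{\Theta}(n^{1/5})$ is \emph{only} to ensure, via $q^{m}\le 1/(n\cdot\exp(\tilde{O}(n^{1/5})))$ and a union bound over all bunches and all traces, that no terminal bunch is ever completely erased; conditioned on this event, each TED node deletion becomes exactly one i.i.d.\ bit deletion in each of $S_0,S_1$, and Chase's $\exp(\tilde{O}(n^{1/5}))$ string bound applies directly. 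There is no anchor alignment, no mean-based estimator, and no modification of the Nazarov--Peres analysis.

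Beyond the mismatch in target, two of your steps would fail even for the stronger statement you are chasing. First, the flattening of Section~\ref{sec:treestotraces} writes down node \emph{labels} in preorder; for an unlabeled tree it carries no structural information, so you would still need an edge/leaf traversal encoding like the paper's before there is anything to reconstruct. Second, and more seriously, the assertion that two distinct fuzzy trees of degree $m$ differ only through whole $m$-blocks is false: the fuzzy condition constrains only terminal leaf-bunches, so two such trees can have identical bunches and differ by a single internal node (e.g.\ a path of length $r$ capped by one $m$-bunch versus a path of length $r-1$ capped by one $m$-bunch). In any reasonable encoding the coefficient vector $x-y$ then has an isolated $\pm 1$ entry, and the Borwein--Erd\'elyi / Nazarov--Peres bound gives you nothing better than $e^{-cL}$; the hoped-for amplification to a $1/\poly(n)$ gap does not follow, and you are back to the $\exp(\tilde{O}(n^{1/5}))$ regime the paper obtains by a much simpler route.
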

\begin{proof}
Sample $T(n, \delta)$ traces from $A$.  For trace $m$, construct two binary strings $S^m_0, S^m_1$ from a preorder traversal of the trace. $S^m_0$ uses the alphabet $\{0, 2\}$ and $S^m_1$ uses the alphabet $\{2, 1 \}$. $S^m_1$ is constructed by writing a 1 when the traversal moves down an edge, and a 2 when the traversal encounters a leaf.  $S^m_2$ is constructed by writing a 0 when the traversal moves up an edge, and a 2 when the traversal encounters a leaf.

Because of the assumption that $A$ is fuzzy, for every trace the probability that there exists a node with all children deleted is less than $1/(n \cdot exp(\tilde{O}(n^{1/5}))$.  By the union bound the probability that any trace contains a node where all children are deleted is less than $1/n$.

$A$ has two unknown strings $S_0$ and $S_1$ constructed analogously to the trace strings above.  It is easy to see that a node deletion in $A$ corresponds to a single bit deletion in $S_0$ and $S_1$.  This property holds for multiple deletions as well. Further with high probability no node will have all its children removed.  Because only one edge is represented by each bit in the two strings, only one edge is removed in a TED deletion, and all children of a leaf are not deleted with high probability, we conclude that with high probability deletions in $S_0$ and $S_1$ behave as $i.i.d$ deletions in the string deletion channel model.  Therefore, by applying the results from \cite{Chase19lower} we obtain the desired result.
\end{proof}

\section{Recovery under Left-Propagation}\label{sec:Left-Propagation}

Here we show that learning under left propagation requires exponentially many samples.  Namely we show the existence of two simple trees $A$ and $B$ that can only be distinguished if the left propagation deletion model removes no nodes.  This happens with probability $(1-q)^n$, so the worst-case sample bound would then be $\Omega((1-q)^{-n})$.  This bound is also what any nontrivial algorithm should try to overcome.

We call $LP_k(A)$ the $k$'th order left propagation trace set of tree $A$, which is the set of all traces obtainable by removing $k$ nodes from $A$. The trace set contains no probabilistic information, it simply categorizes what traces could be generated regardless of how unlikely they are to appear.
We approach this by showing the trace sets of $A$ and $B$ under left-prop are identical for any non-zero number of deletions.  We use the shorthand $A_n$ and $B_n$ to show that $A$ and $B$ both have $n$ non-root nodes.

\begin{figure}[h!]
\begin{center}
\includegraphics[width=1.4in]{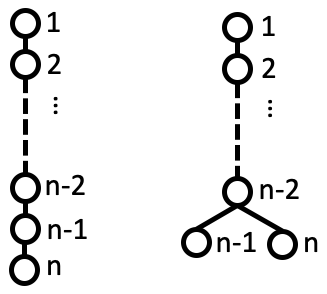}
\end{center}
 \caption{The two trees we are investigating.  Notice that the trees' structures are identical until node $n-2$. Removing any single node in either tree results in the tree $A_{n-1}$.}
\end{figure}

\begin{theorem}
For all $n > 3$ and all $m \leq n$ there exist unlabelled trees $A$ and $B$ containing $n$ non-root nodes each, such that $LP_m(A) = LP_m(B)$. 

\end{theorem}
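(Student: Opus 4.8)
The plan is to construct $A$ and $B$ explicitly as two trees that agree everywhere except at the very bottom, and then show that after any single left-propagation deletion both collapse to the same tree $A_{n-1}$, which by a simple induction forces $LP_m(A)=LP_m(B)$ for all $m\ge 1$. Concretely, I would take both $A_n$ and $B_n$ to consist of a long ``spine'' of nodes $1,2,\ldots,n-2$ (each the unique child of the previous), and then differ only in how the last two nodes $n-1,n$ are attached: in $A_n$ they form a continuation of a left-only path, while in $B_n$ they are attached so as to create a small branching at node $n-2$ (e.g. node $n-2$ has two children, one of which is a left-path of length $2$). The figure in the excerpt already fixes this picture; I would just pin down the combinatorics so that the two local configurations are genuinely non-isomorphic as rooted trees, which needs $n>3$ so there is enough spine to see the difference.

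First I would recall precisely what left-propagation does: deleting a node $v$ means walking down the left-only path starting at $v$, shifting each node (with its label) up to its position, and finally deleting the bottom node of that left-only path; all other structure is untouched. The key structural observation is that for these particular trees, \emph{every} non-root node $v$ lies on a left-only path whose bottom endpoint is one of the two ``special'' bottom nodes (since the only branching in $B$, and none in $A$, sits at $n-2$). Hence deleting any single node performs a relabel-and-shift that always results in removing a bottom node, and the resulting unlabelled rooted tree is $A_{n-1}$ in \emph{both} cases — the branching in $B$ gets ``used up'' exactly when a deletion touches that part of the tree, and is invisible otherwise because we are working with unlabelled trees. I would verify this by a short case analysis: (i) deleting a spine node strictly above $n-2$, (ii) deleting node $n-2$, (iii) deleting node $n-1$ or $n$ — in each case naming the left-only path involved and checking the resulting tree is $A_{n-1}$.

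Then the theorem follows by induction on $m$. For the base case $m=1$, the case analysis above gives $LP_1(A_n)=\{A_{n-1}\}=LP_1(B_n)$ (as sets of unlabelled rooted trees). For the inductive step, write $LP_m(A_n)=\bigcup_{T\in LP_1(A_n)} LP_{m-1}(T)=LP_{m-1}(A_{n-1})$, and similarly $LP_m(B_n)=LP_{m-1}(A_{n-1})$, so they coincide; one only needs that $A_{n-1}$ has the same ``all single deletions collapse things predictably'' property, which is automatic since $A_{n-1}$ is again one of our spine trees (indeed it is an $A$-type tree, handled by case (i)/(iii)). Note we also need $m\le n$ only so that the trace set is nonempty/well-defined in the stated range; for $m=0$ the trees are distinct, which is exactly the point — distinguishing them requires the zero-deletion event.

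\textbf{The main obstacle} I anticipate is being careful about what ``trace set'' means when nodes carry no labels: I must make sure the claim is about isomorphism classes of rooted ordered (or unordered?) trees, and that the left-propagation shift of labels is irrelevant here precisely because the trees are unlabelled. A secondary subtlety is confirming that in $B_n$ the branching node $n-2$ truly has all of its descendants on left-only paths terminating at a bottom node, so that no deletion can ever ``preserve'' the branch — if the branch were attached differently this would fail, so the construction must be chosen to force exactly this. Once the construction is pinned down, the rest is the routine case check plus the one-line induction.
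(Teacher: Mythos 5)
Your proposal is correct and takes essentially the same route as the paper: the same construction (a bare chain $A_n$ versus a chain whose bottom node carries the two leaves in $B_n$), the same key observation that every single left-propagation deletion collapses either tree to $A_{n-1}$, and the same recursion/induction on $m$ giving $LP_m(A_n)=LP_m(B_n)=\{A_{n-m}\}$. The only minor blemish is the vague parenthetical description of how the last two nodes attach in $B_n$, but since you defer to the figure's configuration the argument goes through as in the paper.
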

\begin{proof}
For $n > 3$, consider the unlabelled tree $A$ with $n$ nodes, where every internal node has exactly one child, including the root.  Also consider the tree $B$ with $n$ nodes which is constructed by taking $A_{n-1}$ and adding a sibling to the single leaf.

The set $LP_1(A_n)$ only contains one item, namely $A_{(n-1)}$.  There is no dependence on a deletion probability here, we are requiring that one and only one node is removed from the tree.  Set $m < n$. Using the above as a base case, by recursion we see that $LP_m(A_n) = A_{(n-m)}$.

An analogous argument holds for $LP_1(B_n)$. If any node in the chain is deleted, then the left leaf is promoted to be the parent of the right leaf.  This structure is a long chain with $n-1$ non-root nodes, the definition of $A_{(n-1)}$.  If either leaf is removed, then we obviously get a chain of $n-1$ non-root nodes. Thus $LP_1(B_n) = A_{(n-1)}$  Because of this the same recursive argument for $LP_m(A_n)$, implying $LP_m(B_n) = A_{(n-m)}$.
So we have 
\begin{align*}
LP_m(A_n) & = LP_m(B_n) \\
& = A_{(n-m)}
\end{align*}
finishing the proof.
\end{proof}
An easy corollary follows from above.
\begin{corollary}
In the worst case, $\Omega((1-q)^{-n})$ traces are required to  learn a tree topology in the Left Propagation model.
\end{corollary}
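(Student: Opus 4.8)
The plan is to upgrade the combinatorial statement of the preceding theorem---equality of the trace \emph{sets}---into an equality of trace \emph{distributions} conditioned on at least one deletion, and then invoke the standard information-theoretic lower bound for distinguishing two hypotheses. First I would fix the trees $A_n$ and $B_n$ from the theorem and describe the left-propagation trace process concretely: each of the $n$ non-root nodes is deleted independently with probability $q$, so the number $K$ of deleted nodes is distributed as $\mathrm{Binomial}(n,q)$, identically for both trees. The key observation is that, because $LP_m(A_n) = LP_m(B_n) = A_{n-m}$ is a \emph{single} tree for every $m \ge 1$, conditioned on $\{K = m\}$ with $m \ge 1$ both trees produce exactly $A_{n-m}$; hence the conditional law of a trace of $A_n$ given $K \ge 1$ equals the conditional law of a trace of $B_n$ given $K \ge 1$.

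Second, I would pin down the total variation distance between the trace distributions $\mu_A$ and $\mu_B$ of $A_n$ and $B_n$. Writing $p = 1-q$, the event $\{K = 0\}$ has probability $p^n$ under both distributions, and on that event $\mu_A$ outputs $A_n$ while $\mu_B$ outputs $B_n$, two distinct trees; by the theorem every trace arising from $K = m \ge 1$ deletions is the chain $A_{n-m}$, distinct from both $A_n$ and $B_n$. Combining this with the conditional-law equality above yields $\|\mu_A - \mu_B\|_{\mathrm{TV}} = p^n = (1-q)^n$.

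Third, I would finish with the routine reduction from hypothesis testing to sample complexity: for $m$ i.i.d.\ traces, subadditivity of total variation under product measures gives $\|\mu_A^{\otimes m} - \mu_B^{\otimes m}\|_{\mathrm{TV}} \le m\,\|\mu_A - \mu_B\|_{\mathrm{TV}} = m(1-q)^n$, whereas any algorithm that reconstructs (and hence in particular distinguishes) the topology with success probability $1-\delta$ forces $\|\mu_A^{\otimes m} - \mu_B^{\otimes m}\|_{\mathrm{TV}} \ge 1 - 2\delta$. Therefore $m \ge (1-2\delta)(1-q)^{-n} = \Omega((1-q)^{-n})$, which is the claimed bound. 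The only genuinely delicate point is the first step---arguing that the conditional trace laws coincide \emph{exactly}, not merely their supports---but this is immediate once one notes that each $LP_m$ is a single tree, so everything downstream is a short calculation and a standard appeal to the product structure of total variation.
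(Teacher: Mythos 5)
Your proof is correct and follows essentially the same route as the paper, which argues that $A_n$ and $B_n$ can only be distinguished when no node is deleted (probability $(1-q)^n$), since for every $m\ge 1$ the trace sets coincide as the single tree $A_{n-m}$. Your write-up simply makes the paper's informal sketch rigorous by phrasing it as a total-variation bound of $(1-q)^n$ between the trace distributions plus the standard hypothesis-testing reduction, which is a welcome but not substantively different elaboration.
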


\section{Discussion}
In this paper we gave many connections between
the string trace reconstruction and tree trace
reconstruction problems.  Our results
exploited these connections and string
trace reconstruction techniques
to give state-of-the-art bounds for reconstructing
arbitrary trees from traces.

\bibliography{paper}

\end{document}